\documentclass[letterpaper]{article} 
\usepackage{paperstyle}
\usepackage[hyphens]{url} 
\usepackage{graphicx} 
\usepackage{natbib} 
\usepackage{caption} 
\usepackage{amsmath}
\usepackage{amssymb}
\usepackage{amsthm}
\usepackage{booktabs}
\usepackage{placeins}

\newtheorem{theorem}{Theorem}
\newtheorem{proposition}{Proposition}
\newtheorem{corollary}{Corollary}
\theoremstyle{remark}

\newcommand{\B}{\mathrm{B}}
\newcommand{\M}{\mathrm{M}}
\newcommand{\E}{\mathbb{E}}
\newcommand{\TV}{\operatorname{TV}}

\title{Safeguards Based on Copyable Context Cannot Provide Reliable Safety for LLMs}

\author{
    Pingyu Wu\textsuperscript{1,2},
    Lingyao Zhu\textsuperscript{3},
    Weiming Zhang\textsuperscript{1}\corresponding,
    Nenghai Yu\textsuperscript{1}
}
\affiliations{
    \textsuperscript{1}University of Science and Technology of China\\
    \textsuperscript{2}Hefei AiDA Lab\\
    \textsuperscript{3}Zhejiang Wanli University\\
    \texttt{wupingyu@mail.ustc.edu.cn}, \texttt{zhangwm@ustc.edu.cn}
}

\begin{document}

\maketitle

\begin{abstract}

Large language model safeguards decide whether to answer before seeing how an
answer will be used. This creates a basic problem for dual-use tasks: the same
answer can help an authorized professional or an attacker, while an attacker
can imitate a benign request and interaction history. We separate the
capability released by the model from the evidence available about downstream
use. When that evidence is copyable, we derive the exact worst-case floor on
attacker assistance while preserving useful answers. The result yields a
safety trilemma: Useful Capability, Reliable Safety, and Open Access cannot
coexist. We then show how a trusted credential can complement existing
safeguards by adding hard-to-copy information that predicts actual downstream
use, and identify the stronger condition needed to eliminate the floor.
Evidence from dual-use evaluations, adaptive attacks, and deployed
trusted-access programs supports the practical relevance of these conditions.
\end{abstract}

\section{Introduction}
LLM safeguards based on safety training, request and response filtering, and
safe-completion policies decide what to release from information available
before downstream use is observed
\citep{bai2022constitutional,sharma2025constitutional,yuan2025safecompletions}.
This timing matters for dual-use tasks: the same vulnerability analysis can
support an authorized assessment or an intrusion without any change in its
technical content
\citep{forge2010dualuse,bostrom2011informationhazards,grinbaum2024dualuse,
kang2024programmatic}.
For the tasks studied here, the safety-relevant object is therefore the
released answer together with its downstream application, rather than the
request alone \citep{forge2010dualuse,grinbaum2024dualuse}.

Expressed intent and interaction history can improve routing when users reveal
useful information about their goals
\citep{uppaal2026opensafeintent,ferrao2026trueintents,
zheng2026carryon,deng2026clarification}. These methods infer purpose from what
a user says and does before release, so their observable evidence still
precedes the later downstream application
\citep{uppaal2026opensafeintent,zheng2026carryon,deng2026clarification}.

Intent concealment and adaptive attacks show how a malicious user can reshape
requests and interactions after learning how a defense works
\citep{wu2025concealment,nasr2025attacker}. We therefore study attackers that
can submit the same request as an authorized user, claim the same purpose, and
reproduce the same answers to access questions. If the two uses look identical
before release, the safeguard cannot know which request should receive a
different answer. The missing information concerns actual downstream use
rather than the choice of classifier.

This raises a question: how much attacker assistance is unavoidable when every
useful answer can also aid malicious use?

Existing work identifies unsafe information leakage
\citep{glukhov2024leaks}, computational barriers to prompt and output filtering
\citep{ball2025impossibility}, and constraints faced by utility-preserving
defense wrappers \citep{bhatt2026defensetrilemma}. These results expose
different safety obstructions but do not give the exact assistance floor
created by hidden downstream use and reproducible evidence. The question
becomes more important as models complete longer and more operational tasks
\citep{kwa2025longtasks,wijk2024rebench,openai2025preparedness}.

We answer it by separating released capability from evidence about downstream
use.

The paper makes three contributions:

\begin{itemize}

\item \textbf{Why Safeguards Keep Failing.}
We identify a common weakness behind intent checks, filters, and interactive
defenses: before an answer is used, attackers can present the same evidence
as legitimate users.

\item \textbf{Useful, Safe, Open: Pick Two.}
We prove exactly how much help must still reach attackers, even when safeguards
ask additional questions, vary their decisions, or combine evidence across
repeated attempts.

\item \textbf{What Reliable Safety Requires.}
We show what must change: reliable safety requires evidence attackers cannot
copy and that is tied to actual use, not better guesses from prompts or conversations.

\end{itemize}

\begin{figure*}[t]
\centering
\IfFileExists{main.png}{
\includegraphics[width=0.9\textwidth]{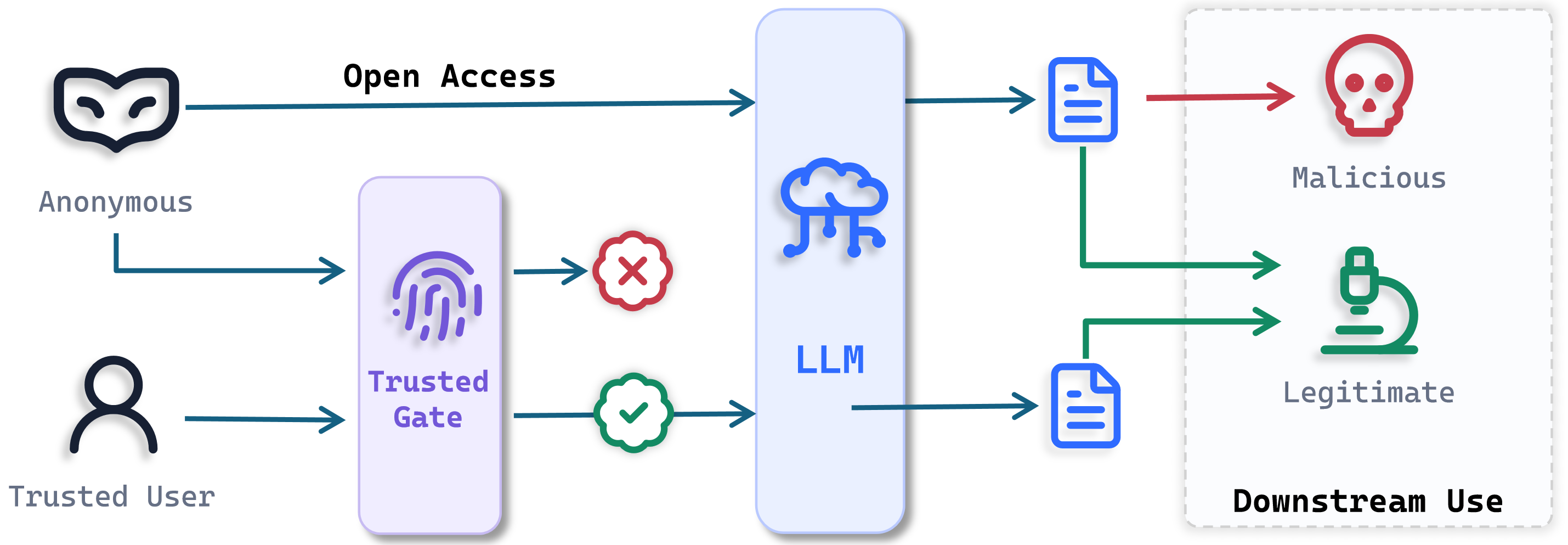}
}{
\fbox{\parbox[c][1.25in][c]{0.86\textwidth}{\centering
Copyable evidence exposes the same capability to both downstream uses;
trusted credentials can distinguish them.}}
}
\caption{Illustrative capability allocation with copyable evidence and
trusted credentials. A trusted credential can improve capability allocation
by adding noncopyable information that predicts downstream use.}
\label{fig:access-boundary}
\end{figure*}

\section{Related Work}

\paragraph{Dual use and capability control.}
Dual-use and information-hazard scholarship distinguishes an artifact's
capability from the purposes to which it is put
\citep{forge2010dualuse,bostrom2011informationhazards,grinbaum2024dualuse}.
Unlearning, filtering, and modular training change which capabilities remain
available \citep{li2024wmdp,obrien2025deepignorance,roland2026modular}. We
instead characterize the minimum attacker assistance that necessarily remains
at a fixed legitimate-use target.

\paragraph{Safety, utility, and impossibility.}
\citet{glukhov2024leaks} study leakage from composing individually permissible
answers; \citet{ball2025impossibility} prove computational barriers to
filtering; \citet{bhatt2026defensetrilemma} constrain continuous,
utility-preserving defense wrappers. Our result places no continuity or
computational restriction on the release rule. Under copyable evidence,
arbitrary interactive safeguards reduce exactly to a static release-menu
frontier. We quantify imperfect copying, trusted evidence, and correlated
sessions relative to this frontier.

\paragraph{Intent and trusted credentials.}
Intent-aware evaluation and clarification study how expressed purpose can
improve average routing \citep{uppaal2026opensafeintent,
ferrao2026trueintents,zheng2026carryon,deng2026clarification}. Access-control
work introduces verified information about users or authorization
\citep{wybitul2025access,kembery2024modelaccess,adler2024personhood,
shih2025zkpromises}. We connect these regimes by asking exactly when observable
access evidence has value against a strategic user.

\section{Main Result}
\label{sec:main-results}
Our analysis separates capability allocation from evidence quality. For a
fixed task context, the available releases and their two downstream utilities
determine a capability floor \(\Gamma(q)\) at legitimate-use target \(q\). The
evidence channel determines whether a safeguard can assign different release
distributions to legitimate and malicious uses and move below that floor.
Finite spaces make both parts explicit.

For a finite set \(\mathcal X\), let \(\Delta(\mathcal X)\) denote its
probability simplex. A randomized decision rule from \(\mathcal X\) to
\(\mathcal Y\) is a Markov kernel. We write
\(\TV(P,Q)=\frac12\sum_x|P(x)-Q(x)|\). For every function
\(f:\mathcal X\to[0,1]\),
\begin{equation}
\left|\E_P[f]-\E_Q[f]\right|\leq \TV(P,Q).
\label{eq:tv-expectation}
\end{equation}
All approximate results below use this total-variation inequality to measure
copying error and credential separation.

\subsection{Counterfactual Use and Access Evidence}
\label{sec:model}

Fix one observed public task context \(W=w\), containing the complete task and
the identity, role, and downstream use claimed in the request. The release rule
uses \(W\) to determine the task-specific release menu, but a claimed
use is not evidence of actual downstream use. Let
\(\mathcal M_w\) be the nonempty set of malicious downstream uses in
the threat model.

For this context, \(\mathcal A=\mathcal A_w\) is the finite release menu: the
terminal outputs distinguished by the deployment's utility resolution,
including a refusal \(a_0\). We compare the same release under two
counterfactual downstream-use worlds. In world \(\B\), the identity, task, and
use stated in \(w\) are genuine and legitimate, and the release is used only
for that task. In world \(\M\), a human chooses an allowed malicious use after
receiving the release.

The functions
\(u_{\B,w}:\mathcal A\to[0,1]\) and
\(v_{\M,w,m}:\mathcal A\to[0,1]\) evaluate the direct instrumental value of
the same release in the legitimate use and in a malicious use
\(m\), respectively. Because a downstream user can choose how to exploit an
answer after observing it, define
\[
u_{\M,w}(a)=\sup_{m\in\mathcal M_w}v_{\M,w,m}(a).
\]
In the vulnerability-analysis example, \(\B\) evaluates the answer inside the authorized
assessment and \(\M\) evaluates the most useful allowed misuse of the
byte-identical answer.
These utilities measure assistance available at release, not the probability
or severity of realized downstream harm.
We suppress \(w\) in local results.

After \(w\) is fixed, \(H=(X_1,Y_1,\ldots,X_T,Y_T)\) records dialogue observed
before release in an \emph{access-verification} exchange. At turn \(t\), a user
strategy draws \(X_t\), and the committed policy \(\kappa_t\) returns a
verification response \(Y_t\). If the exchange continues, the user's reply to
\(Y_t\) becomes part of \(X_{t+1}\); neither variable records how the terminal
release is ultimately used. Task-serving clarification that changes the task,
menu, or utilities is absorbed into \(W\). The exchange culminates in one
terminal release \(A\sim g(\cdot\mid h)\). Let \(\mathcal H\) denote the finite
transcript space. Useful partial releases are covered by the multi-release
causal process below.

If \(b_t\) is the legitimate-reference user kernel and
\(m_t^{\sigma}\) is the kernel chosen by malicious strategy \(\sigma\), the
access laws factor as
\begin{equation}
\begin{aligned}
P_{\B}^{\kappa}(h)
&=\prod_{t=1}^{T}b_t(x_t\mid w,h_{t-1})
\kappa_t(y_t\mid w,h_{t-1},x_t),\\
P_{\M,\sigma}^{\kappa}(h)
&=\prod_{t=1}^{T}m_t^{\sigma}(x_t\mid w,h_{t-1})
\kappa_t(y_t\mid w,h_{t-1},x_t).
\end{aligned}
\label{eq:access-only-factorization}
\end{equation}
The joint observational law is
\(P_z^{\kappa,g}(h,a)=P_z^{\kappa}(h)g(a\mid h)\), with the strategy index
included for \(z=\M\). The kernel \(g\) represents the model and every
safeguard that affects the terminal release. For any transcript law \(P\),
\begin{equation}
U_z(P,g)=\sum_{h,a}P(h)g(a\mid h)u_z(a),
\qquad z\in\{\B,\M\}.
\label{eq:operational-utility}
\end{equation}
The legitimate-use target \(q\) requires
\(U_{\B}(P_{\B}^{\kappa},g)\geq q\), the expected legitimate-use utility for
this task and reference process. Optimizing first over all finite Markov
kernels \(g\) isolates the strongest allocation guarantee available from the
modeled observations. The implementation boundary below then identifies what
restricted mechanisms inherit.

\subsection{Minimum Attacker Assistance}

We first solve the capability-allocation problem when both downstream processes
receive the same output distribution. The next two subsections show why
copyable evidence forces this baseline in the worst case. Continue to condition
on one public context \(w\), assume a refusal \(a_0\) has zero utility for both
purposes, and let
\(q_{\max}=\max_a u_{\B}(a)\). The minimum attacker assistance is
\begin{equation}
\Gamma(q)=
\min_{\substack{\mu\in\Delta(\mathcal A)\\
\sum_a\mu(a)u_{\B}(a)\geq q}}
\sum_a\mu(a)u_{\M}(a),
\qquad 0\leq q\leq q_{\max}.
\label{eq:minimum-attacker-assistance}
\end{equation}
It is the minimum attacker assistance compatible with legitimate-use utility
\(q\). If both utilities happen to be equal for every release, then
\(\Gamma(q)=q\). Equality is not required for the results below.

For finite \(\mathcal A\), randomized outputs realize every point in the polytope
\(\mathcal V=\operatorname{conv}\{(u_{\B}(a),u_{\M}(a)):a\in\mathcal A\}\), so
\(\Gamma(q)=\min\{y:(x,y)\in\mathcal V,\ x\geq q\}\) is attained,
nondecreasing, convex, and piecewise linear. Removing actions cannot enlarge
\(\mathcal V\) and weakly raises the minimum at the same feasible \(q\).

\paragraph{Example.}
For the running vulnerability-analysis context, the menu includes refusal,
mitigation-only advice naming no exploitable path, a root-cause analysis that
explains the defect without a working chain, and a complete exploit chain. The
last carries nearly all of the assessor's value and nearly all of the
intruder's, so it sits near the diagonal. The intermediate releases retain
assessment value while advancing intrusion considerably less, placing
\(\mathcal V\) below the diagonal. Whether such actions exist is a property of
the released capability and not of the access rule. Adding an output below the
current tradeoff can lower \(\Gamma(q)\); reweighting the existing menu is
already included in the minimization that defines \(\Gamma(q)\).

\paragraph{Dual-use condition.}
We study tasks satisfying the following condition:
\begin{equation}
u_{\B}(a)>0\quad\Longrightarrow\quad u_{\M}(a)>0
\quad\text{for every }a\in\mathcal A.
\label{eq:dual-use-condition}
\end{equation}
This says that every release with legitimate value has at least one malicious
downstream use. It does not require equal value in the two uses.
This condition defines the dual-use task class studied here; tasks with
\(u_{\B}=u_{\M}\) form a nonempty subclass.
For a task with \(q_{\max}>0\), finiteness of \(\mathcal A\) allows us to
define
\[
\rho=\min_{a:u_{\B}(a)>0}\frac{u_{\M}(a)}{u_{\B}(a)}>0.
\]
Then \(u_{\M}(a)\geq\rho u_{\B}(a)\) for every release, and every distribution
feasible for Equation~\eqref{eq:minimum-attacker-assistance} satisfies
\[
\E_\mu[u_{\M}]\geq\rho\E_\mu[u_{\B}]\geq\rho q.
\]
Therefore
\begin{equation}
\Gamma(q)\geq\rho q>0
\quad\text{for every feasible }q>0.
\label{eq:positive-assistance}
\end{equation}
Thus a shared output distribution cannot provide positive legitimate value
with zero attacker assistance. Finiteness establishes
positivity, not a deployment-scale margin: a quantitative ceiling \(\beta>0\)
is impossible only when \(\beta<\Gamma(q)\) at the chosen utility resolution.

\subsection{Copyable Evidence}
\label{sec:copyable-evidence}

Public context \(W\), access evidence \(H\), and public randomness may be used
by classifiers, filters, and interactive checks. The relevant question is
whether the attacker can reproduce the resulting evidence distribution.
Section~\ref{sec:trusted-access} instead introduces a trusted signal \(S\)
that predicts downstream use and cannot be freely reproduced; a trusted
credential supplies this signal.

For local targets \(q\in[0,q_{\max}]\) and an assistance ceiling
\(\beta\in[0,1]\), the three deployment objectives are: preserve expected
legitimate-use utility at least \(q\), keep worst-case attacker assistance at
most \(\beta\), and maintain \emph{open access}, meaning that release requires
no trusted credential and uses only copyable evidence. We call the worst-case
ceiling \(\beta\) a \emph{reliable safety} guarantee because it is evaluated
against actual downstream use; zero assistance is the special case \(\beta=0\).

Formally, let \(\Sigma_{\M}\) denote the admissible malicious strategy class.
The defender commits to \((\kappa,g)\), the attacker observes this commitment,
and then chooses \(\sigma\in\Sigma_{\M}\). Strategy \(\sigma\) induces
\(P_{\M,\sigma}^{\kappa}\). Randomized and limiting attacks form
\begin{equation}
\mathcal C_{\kappa}
=\overline{\operatorname{co}}
\{P_{\M,\sigma}^{\kappa}:\sigma\in\Sigma_{\M}\}.
\end{equation}

The following observation gives a sufficient condition for copyability.

\begin{proposition}[Copying the legitimate strategy]
\label{prop:strategy-inclusion}
Fix an access policy \(\kappa\). If the malicious strategy class contains a
strategy \(\sigma_{\mathrm{copy}}\) whose user kernels equal the legitimate
reference kernels at every reachable history, then
\begin{equation}
P_{\M,\sigma_{\mathrm{copy}}}^{\kappa}
=P_{\B}^{\kappa}
\quad\text{and hence}\quad
P_{\B}^{\kappa}\in\mathcal C_{\kappa}.
\end{equation}
\end{proposition}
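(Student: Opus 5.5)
The plan is to compare the two product formulas in Equation~\eqref{eq:access-only-factorization} factor by factor, splitting transcripts into reachable and unreachable ones. The verification responses enter both laws through the same committed kernels \(\kappa_t(y_t\mid w,h_{t-1},x_t)\), because the defender commits to \(\kappa\) before the attacker chooses \(\sigma\). So the only factors that could differ are the user kernels \(b_t\) and \(m_t^{\sigma_{\mathrm{copy}}}\). I will show \(P_{\M,\sigma_{\mathrm{copy}}}^{\kappa}(h)=P_{\B}^{\kappa}(h)\) for every \(h\in\mathcal H\).

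The first step is a forward induction on \(t\) over prefix laws. The claim is that the marginals of \(h_t=(x_1,y_1,\ldots,x_t,y_t)\) agree under the two processes. For \(t=0\) the empty history trivially has probability one in both. Now suppose the prefix laws agree up to \(t-1\), and take any prefix \(h_{t-1}\).
\begin{itemize}
\item If \(h_{t-1}\) has positive probability under \(P_{\B}^{\kappa}\), it is reachable. The hypothesis then gives \(m_t^{\sigma_{\mathrm{copy}}}(\cdot\mid w,h_{t-1})=b_t(\cdot\mid w,h_{t-1})\), and the \(\kappa_t\) factor is shared, so the one-step extensions agree.
\item If \(h_{t-1}\) has probability zero, it has probability zero under both laws by the induction hypothesis. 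Every extension then gets probability zero in both products, whatever the kernels do there.
\end{itemize}
At \(t=T\) this gives equality of the full transcript laws.

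The main obstacle is interpreting ``reachable history'' precisely, since the hypothesis only constrains kernels there. I will read it as positive probability under \(P_{\B}^{\kappa}\), or equivalently under the common prefix law. The induction is organized so that off-support histories never need the hypothesis: a zero prefix factor annihilates the whole product. If the paper intends reachability with respect to the attacker's own process, the same induction still works, because the two prefix supports coincide at every stage.

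The second claim is then immediate. Since \(\sigma_{\mathrm{copy}}\in\Sigma_{\M}\), the law \(P_{\M,\sigma_{\mathrm{copy}}}^{\kappa}\) lies in \(\{P_{\M,\sigma}^{\kappa}:\sigma\in\Sigma_{\M}\}\). That set is contained in its closed convex hull \(\mathcal C_{\kappa}\), so \(P_{\B}^{\kappa}\in\mathcal C_{\kappa}\). No property of the terminal kernel \(g\) is used, which matches the statement being about the access laws alone.
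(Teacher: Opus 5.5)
Your proof is correct and takes the same route as the paper, which substitutes the identical user kernels into the factorization \eqref{eq:access-only-factorization} to get equal transcript probabilities and then notes that the resulting law lies in \(\mathcal C_\kappa\). Your forward induction over prefixes makes the ``reachable history'' qualifier rigorous, a point the paper's one-line substitution leaves implicit.
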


\begin{proof}
Substituting the same user kernels into
Equation~\eqref{eq:access-only-factorization} gives identical probabilities for
every transcript. Membership in \(\mathcal C_\kappa\) follows immediately.
\end{proof}

The proposition covers any number of adaptive questions and any public
randomization by the safeguard. Interaction can reveal a difference only when
the malicious process cannot implement the legitimate user's response
strategy, or when the exchange uses evidence that is not copyable.

We say that the evidence is \emph{copyable} under policy \(\kappa\) when
\begin{equation}
P_{\B}^{\kappa}\in\mathcal C_{\kappa}.
\label{eq:copyable-evidence}
\end{equation}
This includes exact copying and arbitrary approximation through randomized and
limiting strategies. A trusted credential creates noncopyable evidence when
its predictive distribution differs across downstream uses and the attacker
cannot reproduce it. The results below cover exact and approximate copying;
Section~\ref{sec:empirical-evidence} assesses the corresponding empirical
premises.

For the fixed context \(w\), define
\begin{align*}
\mathcal G_{\kappa}(q)
&=\{g:U_{\B}(P_{\B}^{\kappa},g)\geq q\},\\
R_{\kappa}(q)
&=\inf_{g\in\mathcal G_{\kappa}(q)}
\sup_{Q\in\mathcal C_{\kappa}}U_{\M}(Q,g),
\end{align*}
and \(R_T^\star(q)=\inf_{\kappa}R_{\kappa}(q)\). The attacker may adapt to
both committed components. The strategy class generates its transcript laws
under \(\kappa\) before the terminal kernel \(g\) acts, so
\(\mathcal C_\kappa\) is indexed by the access policy. Assigning task utility
to the terminal release makes both application utilities functions of \(A\).

\subsection{Worst-Case Value of Copyable Evidence}

\begin{theorem}[Exact interactive-to-static reduction under copyable evidence]
\label{thm:compact-robust-value}
For finite \(\mathcal H,\mathcal A\), if
\(P_{\B}^{\kappa}\in\mathcal C_{\kappa}\), then
\begin{equation}
R_{\kappa}(q)=\Gamma(q)
\quad\text{for every feasible }q.
\label{eq:exact-copyable-evidence}
\end{equation}
If the evidence is copyable for every \(\kappa\), then
\(R_T^\star(q)=\Gamma(q)\).
\end{theorem}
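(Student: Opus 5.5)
The argument proves two matching inequalities, \(R_\kappa(q)\geq\Gamma(q)\) and \(R_\kappa(q)\leq\Gamma(q)\), for a fixed access policy \(\kappa\) with \(P_{\B}^{\kappa}\in\mathcal C_\kappa\). The second claim of the theorem then follows by taking the infimum over \(\kappa\) of a quantity that no longer depends on \(\kappa\).

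\emph{Lower bound.} Fix any \(g\in\mathcal G_\kappa(q)\). Since \(P_{\B}^{\kappa}\in\mathcal C_\kappa\), the attacker can play \(P_{\B}^{\kappa}\) itself, so
\[
\sup_{Q\in\mathcal C_\kappa}U_{\M}(Q,g)\geq U_{\M}(P_{\B}^{\kappa},g).
\]
Marginalize the transcript and define \(\mu(a)=\sum_h P_{\B}^{\kappa}(h)g(a\mid h)\). This \(\mu\) lies in \(\Delta(\mathcal A)\) because \(\mathcal H,\mathcal A\) are finite and \(g\) is a Markov kernel. By Equation~\eqref{eq:operational-utility}, \(U_{\B}(P_{\B}^{\kappa},g)=\sum_a\mu(a)u_{\B}(a)\geq q\) and \(U_{\M}(P_{\B}^{\kappa},g)=\sum_a\mu(a)u_{\M}(a)\). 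Hence \(\mu\) is feasible in Equation~\eqref{eq:minimum-attacker-assistance}, and the attacker's value is at least \(\Gamma(q)\). Taking the infimum over \(g\) gives \(R_\kappa(q)\geq\Gamma(q)\).

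The only delicate point is that \(\mathcal C_\kappa\) is a \emph{closed} convex hull, so \(P_{\B}^{\kappa}\) may only be a limit of mixtures of realizable transcript laws. This causes no trouble. For fixed \(g\), the map \(Q\mapsto U_{\M}(Q,g)\) is linear in the finite-dimensional vector \(Q\), and by Equation~\eqref{eq:tv-expectation} it is \(1\)-Lipschitz in total variation. So the supremum over the closed convex hull equals the supremum over the generating laws \(P_{\M,\sigma}^{\kappa}\), and evaluating at the limit point \(P_{\B}^{\kappa}\) is legitimate. I expect this to be the main step needing care. It must be written so that the supremum in \(R_\kappa\) is genuinely over \(\mathcal C_\kappa\), as defined, and the closure is not used to give the attacker more power than the strategies provide.

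\emph{Upper bound.} Let \(\mu^\star\) be a minimizer in Equation~\eqref{eq:minimum-attacker-assistance}; it exists by the polytope argument already given. Take the history-independent kernel \(g^\star(a\mid h)=\mu^\star(a)\). Then \(U_{\B}(P,g^\star)=\sum_a\mu^\star(a)u_{\B}(a)\geq q\) for every transcript law \(P\), in particular for \(P_{\B}^{\kappa}\), so \(g^\star\in\mathcal G_\kappa(q)\). Likewise \(U_{\M}(Q,g^\star)=\Gamma(q)\) for every \(Q\in\mathcal C_\kappa\), since every element of the closed convex hull is still a probability distribution on the finite set \(\mathcal H\). Thus \(R_\kappa(q)\leq\Gamma(q)\). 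Combining the two bounds gives Equation~\eqref{eq:exact-copyable-evidence}.

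Finally, if copyability holds for every \(\kappa\), then \(R_\kappa(q)=\Gamma(q)\) for all \(\kappa\), and so \(R_T^\star(q)=\inf_\kappa R_\kappa(q)=\Gamma(q)\). Feasibility \(q\leq q_{\max}\) ensures that \(\mathcal G_\kappa(q)\) is nonempty and that \(\Gamma(q)\) is finite throughout.
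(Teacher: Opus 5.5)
Your proposal is correct and follows the paper's proof. The lower bound evaluates any feasible \(g\) at the copied law \(P_{\B}^{\kappa}\in\mathcal C_\kappa\) and passes to its release marginal, and the upper bound uses a transcript-independent optimizer of \(\Gamma(q)\); your closed-convex-hull remark is correct but not strictly needed, since \(R_\kappa\) already takes its supremum over \(\mathcal C_\kappa\) itself.
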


\begin{proof}
For any feasible \(g\), copyability makes the reference transcript law admissible
in the closed attacker class. Its marginal release distribution
\(\mu_g(a)=\sum_hP_{\B}^{\kappa}(h)g(a\mid h)\) has legitimate-use utility at
least \(q\), so its attacker assistance is at least \(\Gamma(q)\). Conversely, a
transcript-independent optimizer of
Equation~\eqref{eq:minimum-attacker-assistance}
has attacker assistance \(\Gamma(q)\) under every transcript law.
\end{proof}

\begin{corollary}[LLM safety impossibility trilemma]
\label{cor:safety-trilemma}
Suppose the dual-use condition holds, fix a feasible
\(q\in(0,q_{\max}]\) and an assistance ceiling \(\beta<\Gamma(q)\), and
suppose every permitted access-verification policy
\(\kappa\) uses copyable evidence:
\(P_{\B}^{\kappa}\in\mathcal C_{\kappa}\). Then no release mechanism
\((\kappa,g)\) can jointly satisfy
\begin{enumerate}
    \item useful released capability, represented by legitimate-use utility
    at least \(q\),
    \item reliable safety, represented by worst-case attacker assistance at
    most \(\beta\), and
    \item open access, which requires no trusted credential and uses only
    copyable evidence.
\end{enumerate}
In particular, Equation~\eqref{eq:positive-assistance} makes zero assistance
such a ceiling for every feasible \(q>0\).
\end{corollary}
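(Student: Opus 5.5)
The corollary follows from Theorem~\ref{thm:compact-robust-value} by contradiction. Suppose some mechanism \((\kappa,g)\) satisfies all three properties. Open access (item~3) means \(\kappa\) is a permitted policy using only copyable evidence, so \(P_{\B}^{\kappa}\in\mathcal C_{\kappa}\). Item~1 says \(U_{\B}(P_{\B}^{\kappa},g)\geq q\), so \(g\in\mathcal G_{\kappa}(q)\). Item~2 says the worst-case attacker assistance \(\sup_{Q\in\mathcal C_{\kappa}}U_{\M}(Q,g)\) is at most \(\beta\).

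Next I would bound this supremum from below directly. Copyability puts the reference law \(P_{\B}^{\kappa}\) in the attacker class. Hence
\[
\sup_{Q\in\mathcal C_{\kappa}}U_{\M}(Q,g)\geq U_{\M}(P_{\B}^{\kappa},g)=\sum_a\mu_g(a)u_{\M}(a),
\qquad \mu_g(a)=\sum_hP_{\B}^{\kappa}(h)g(a\mid h).
\]
Here \(\mu_g\in\Delta(\mathcal A)\), and \(\sum_a\mu_g(a)u_{\B}(a)=U_{\B}(P_{\B}^{\kappa},g)\geq q\). So \(\mu_g\) is feasible in Equation~\eqref{eq:minimum-attacker-assistance}, and its attacker assistance is at least \(\Gamma(q)\). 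Equivalently, \(R_{\kappa}(q)=\Gamma(q)\) from Theorem~\ref{thm:compact-robust-value} and the definition of \(R_\kappa\) as an infimum over \(\mathcal G_\kappa(q)\) give the same bound for this particular \(g\). Combined with item~2, this yields \(\Gamma(q)\leq\beta\), which contradicts \(\beta<\Gamma(q)\).

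Finally, the ``in particular'' clause takes \(\beta=0\). By the dual-use condition, Equation~\eqref{eq:positive-assistance} gives \(\Gamma(q)\geq\rho q>0\) for every feasible \(q>0\). So \(\beta=0<\Gamma(q)\) is an admissible ceiling, and zero assistance is impossible.

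The only delicate step is the bookkeeping around the supremum. The closure \(\overline{\operatorname{co}}\) in \(\mathcal C_\kappa\) means the supremum may be attained only at a limit point. Evaluating at the single member \(P_{\B}^{\kappa}\) avoids this issue, since \(U_{\M}(\cdot,g)\) is linear and finite-dimensional. Beyond that, I would check that ``open access'' is formally the same as the copyability hypothesis on \(\kappa\) and that \(q\leq q_{\max}\) keeps the minimization feasible. I do not anticipate any real obstacle.
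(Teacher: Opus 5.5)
Your proof is correct and follows the paper's route: the paper derives the incompatibility directly from Theorem~\ref{thm:compact-robust-value}, whose lower-bound half is exactly the copied-law argument you spell out (evaluate at the admissible reference law \(P_{\B}^{\kappa}\) and note that \(\mu_g\) is feasible for \(\Gamma(q)\)). The paper's proof also remarks that each pair of the three properties is attainable (a transcript-independent frontier optimizer, constant refusal, and a trusted credential with disjoint supports), which justifies the ``trilemma'' label but is not needed for the impossibility claim as stated.
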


\begin{proof}
The incompatibility follows from
Theorem~\ref{thm:compact-robust-value}. Useful capability and open access are
attained by a transcript-independent frontier optimizer; reliable safety and
open access by constant refusal; useful capability and reliable safety by a
trusted credential with disjoint legitimate and malicious supports.
\end{proof}

\begin{corollary}[Implementation boundary]
\label{cor:implementation-boundary}
For an implementation-restricted class \(\mathfrak G_{\mathrm{impl}}\), define
\(R_{\kappa,\mathrm{impl}}(q)\) by replacing the feasible benchmark rules in
\(R_\kappa(q)\) with feasible rules in \(\mathfrak G_{\mathrm{impl}}\). The
copied-law argument never optimizes over \(g\), so it applies verbatim and
gives
\begin{equation}
R_{\kappa,\mathrm{impl}}(q)\geq\Gamma(q),
\label{eq:implementation-lower-bound}
\end{equation}
with equality whenever \(\mathfrak G_{\mathrm{impl}}\) contains a
transcript-independent optimizer of
Equation~\eqref{eq:minimum-attacker-assistance}.
\end{corollary}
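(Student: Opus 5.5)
The plan is to rerun the lower-bound half of the proof of Theorem~\ref{thm:compact-robust-value} for rules in \(\mathfrak G_{\mathrm{impl}}\). Then I obtain equality from the upper-bound half, restricted to the implementation class. Throughout, \(\kappa\) is fixed with \(P_{\B}^{\kappa}\in\mathcal C_\kappa\). \(R_{\kappa,\mathrm{impl}}(q)\) is the infimum, over \(g\in\mathcal G_\kappa(q)\cap\mathfrak G_{\mathrm{impl}}\), of \(\sup_{Q\in\mathcal C_\kappa}U_{\M}(Q,g)\), with the convention \(\inf\emptyset=+\infty\). If that set is empty, the inequality~\eqref{eq:implementation-lower-bound} holds trivially, so I assume it is nonempty.

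\textbf{Lower bound.} Fix any \(g\in\mathfrak G_{\mathrm{impl}}\) with \(U_{\B}(P_{\B}^{\kappa},g)\geq q\). By copyability, \(P_{\B}^{\kappa}\) is in \(\mathcal C_\kappa\), so
\[
\sup_{Q\in\mathcal C_\kappa}U_{\M}(Q,g)\geq U_{\M}(P_{\B}^{\kappa},g).
\]
Now define the induced release marginal \(\mu_g(a)=\sum_h P_{\B}^{\kappa}(h)g(a\mid h)\), which lies in \(\Delta(\mathcal A)\). By Equation~\eqref{eq:operational-utility}, the two utilities become \(U_{\B}(P_{\B}^{\kappa},g)=\sum_a\mu_g(a)u_{\B}(a)\) and \(U_{\M}(P_{\B}^{\kappa},g)=\sum_a\mu_g(a)u_{\M}(a)\). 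Since the first sum is at least \(q\), \(\mu_g\) is feasible in Equation~\eqref{eq:minimum-attacker-assistance}. The second sum is therefore at least \(\Gamma(q)\). Taking the infimum over \(g\) gives~\eqref{eq:implementation-lower-bound}.

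The key point to emphasize is that this step used only feasibility of \(g\) and the linear structure of \(U_z\). It never used optimization over the full kernel class, so it is insensitive to which restricted class is chosen.

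\textbf{Equality.} Suppose \(\mathfrak G_{\mathrm{impl}}\) contains a transcript-independent rule \(g^\star(a\mid h)=\mu^\star(a)\), where \(\mu^\star\) attains \(\Gamma(q)\). The minimum is attained by compactness of the finite polytope \(\mathcal V\). Because \(\sum_h Q(h)=1\) for every \(Q\), we get \(U_{\B}(Q,g^\star)=\sum_a\mu^\star(a)u_{\B}(a)\geq q\) for every transcript law \(Q\); in particular, \(g^\star\) is feasible under \(P_{\B}^{\kappa}\). Likewise \(U_{\M}(Q,g^\star)=\Gamma(q)\) for every \(Q\in\mathcal C_\kappa\).

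One detail needs checking. \(\mathcal C_\kappa\) is a closed convex hull, so its elements are limits of mixtures, and I must confirm they are still probability laws on the finite set \(\mathcal H\). They are, since the simplex is closed and convex. Hence \(R_{\kappa,\mathrm{impl}}(q)\leq\Gamma(q)\), which combined with the lower bound gives equality.

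I expect no serious obstacle. The only delicate point is the bookkeeping for the empty feasible set and for closure elements of \(\mathcal C_\kappa\), both handled above.
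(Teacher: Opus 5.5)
Your proposal is correct and matches the paper's argument. The lower bound reruns the copied-law step of Theorem~\ref{thm:compact-robust-value}: admissibility of \(P_{\B}^{\kappa}\) makes the induced marginal \(\mu_g\) feasible for \(\Gamma(q)\), and this never uses optimization over \(g\); equality comes from the transcript-independent optimizer attaining \(\Gamma(q)\) under every transcript law, and your handling of the empty feasible set and of closure points of \(\mathcal C_\kappa\) correctly fills in details the paper leaves implicit.
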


For restricted implementations, \(\Gamma(q)\) remains the universal floor;
any strict gap above it is implementation loss.
Only the matching upper-bound construction in
Theorem~\ref{thm:compact-robust-value} uses the unrestricted-kernel benchmark.

\paragraph{Imperfect copying.}
The bound changes continuously when copying is imperfect. Let
\(
\delta_{\kappa}=\inf_{Q\in\mathcal C_{\kappa}}
\TV(P_{\B}^{\kappa},Q)
\).
Imperfect copying changes the exact result by at most this distance:
\begin{equation}
[\Gamma(q)-\delta_{\kappa}]_+
\leq R_{\kappa}(q)\leq\Gamma(q).
\label{eq:imperfect-copying}
\end{equation}
Applying Equation~\eqref{eq:tv-expectation} to
\(h\mapsto\sum_ag(a\mid h)u_{\M}(a)\) at a law within
\(\delta_{\kappa}+\epsilon\) of \(P_{\B}^{\kappa}\) and letting
\(\epsilon\downarrow0\) gives the lower bound; the transcript-independent
optimizer gives the upper bound. Thus \(\delta_{\kappa}\) is the operational
distance from exact copying discussed in
Section~\ref{sec:empirical-evidence}.

\subsection{Trusted Credentials and Downstream Use}
\label{sec:trusted-access}

The preceding result identifies the missing object: information whose
distribution differs across actual downstream uses and that the attacker
cannot reproduce. This information augments rather than replaces the existing
safeguard. The fixed context \(W\) defines the requested task and release menu,
the access history \(H\) records observable interaction, and the joint
release rule below uses both \(H\) and the additional signal.

The actual downstream application determines whether the release belongs to
the legitimate or malicious use process. Neither the claimed intent nor a
credential defines that label. Continue to condition on \(W=w\), and let
\(S\) be a finite-valued trusted signal with value space \(\mathcal S\),
checked at release. A deployment mechanism that supplies \(S\) is a
\emph{trusted credential} when the attacker cannot freely acquire or reproduce
the signal and its distribution predicts actual downstream use. The signal
may encode verified authorization, role, continuity, or execution-environment
state.

Let \(Z\in\{\B,\M\}\) index which specified actual downstream-use process
receives the release. Let \(P_{\B}^S\) and \(P_{\M}^S\) be the signal
marginals induced by the two values of \(Z\), and define
\(d=\TV(P_{\B}^S,P_{\M}^S)\).
Under equal priors on the two processes, the best binary predictor
using \(S\) alone has balanced accuracy
\begin{equation}
\operatorname{Acc}^{\star}(S)=\frac{1+d}{2},
\label{eq:evidence-prediction-accuracy}
\end{equation}
the standard equal-prior testing identity
\citep{tsybakov2009introduction}. If
\(p=\operatorname{Acc}^{\star}(S)\) denotes this optimal balanced accuracy, then
\(d=2p-1\).

Here \(d\) measures equal-prior separation rather than accuracy under the
deployment prevalence. In the copyable-evidence comparison, \(d\) measures how
well \(S\) distinguishes downstream use at release. More generally, the
incremental value of \(S\) relative to \(H\) depends on the joint law of
\((H,S)\). A signal with \(d=0\) has no predictive value. Acquisition,
transfer, compromise, account creation, and misuse by authorized holders
determine the malicious signal law \(P_{\M}^S\) and hence its prediction
value.

Given \(\kappa\), let
\(K_{\B}^{\kappa}(h\mid s)\) be the legitimate reference access kernel and define
\begin{equation}
P_{\B}^{H,S,\kappa}(s,h)
=P_{\B}^S(s)K_{\B}^{\kappa}(h\mid s).
\label{eq:trusted-beneficial-joint}
\end{equation}
For each malicious strategy \(\sigma\), let
\(P_{\M,\sigma}^{H,S,\kappa}\) be its induced joint law on \((S,H)\), and set
\begin{equation}
\mathcal C_{\kappa}^{H,S}
=\overline{\operatorname{co}}
\{P_{\M,\sigma}^{H,S,\kappa}:\sigma\in\Sigma_{\M},
\ (P_{\M,\sigma}^{H,S,\kappa})^S=P_{\M}^S\}.
\label{eq:trusted-malicious-class}
\end{equation}
Every law in \(\mathcal C_{\kappa}^{H,S}\) has the fixed marginal
\(P_{\M}^S\). Define
the conditional-copy law
\(Q_{\mathrm{copy}}^{\kappa}(s,h)=P_{\M}^S(s)K_{\B}^{\kappa}(h\mid s)\).
Fix any version of \(K_{\B}^{\kappa}(\cdot\mid s)\) outside the support of
\(P_{\B}^S\); this makes the copied law defined on the support of
\(P_{\M}^S\) without affecting legitimate-use utility. Conditional copying
asks whether legitimate access behavior can be reproduced after fixing
\(S=s\).

For a joint law \(P\) on \((S,H)\) and a joint release rule \(g\), extend
Equation~\eqref{eq:operational-utility} by
\(U_z(P,g)=\sum_{s,h,a}P(s,h)g(a\mid s,h)u_z(a)\); for an \(S\)-conditioned
rule \(r:\mathcal S\to\Delta(\mathcal A)\), write
\(U_z(P^S,r)=\sum_{s,a}P^S(s)r(a\mid s)u_z(a)\). Using the same
unrestricted class of release kernels on \((S,H)\), define
\begin{align}
R_{\kappa}^{H,S}(q)
&=\inf_{\substack{g:U_{\B}(P_{\B}^{H,S,\kappa},g)\geq q}}
\sup_{Q\in\mathcal C_{\kappa}^{H,S}}U_{\M}(Q,g),\\
\Gamma_S(q)
&=\inf_{\substack{r:\mathcal S\to\Delta(\mathcal A)\\
U_{\B}(P_{\B}^S,r)\geq q}}U_{\M}(P_{\M}^S,r).
\end{align}
Thus \(\Gamma_S\) is the minimum assistance attainable by conditioning the
release allocation on \(S\) after \(W\) fixes the task and release menu. The
next theorem asks whether \(H\) adds any worst-case information beyond this
trusted signal.

\begin{theorem}[Trusted-signal reduction]
\label{thm:trusted-evidence-reduction}
If \(Q_{\mathrm{copy}}^{\kappa}\in\mathcal C_{\kappa}^{H,S}\), then
\begin{equation}
R_{\kappa}^{H,S}(q)=\Gamma_S(q)
\quad\text{for every feasible }q.
\label{eq:compact-trusted-reduction}
\end{equation}
If the premise holds for every \(\kappa\), the equality above holds for every
access-verification policy, so optimizing over \(\kappa\) does not change the
minimum.
\end{theorem}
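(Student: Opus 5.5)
The plan mirrors the proof of Theorem~\ref{thm:compact-robust-value}: prove two inequalities. The lower bound uses the conditional-copy law as an admissible attack. The upper bound uses a rule that ignores \(H\) and depends only on \(S\).

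\textbf{Lower bound \(R_{\kappa}^{H,S}(q)\geq\Gamma_S(q)\).} Fix any joint rule \(g\) with \(U_{\B}(P_{\B}^{H,S,\kappa},g)\geq q\). Collapsing out the transcript gives an induced \(S\)-conditioned rule
\[
r_g(a\mid s)=\sum_h K_{\B}^{\kappa}(h\mid s)\,g(a\mid s,h).
\]
For \(s\) outside the support of \(P_{\B}^S\), we use the fixed version of \(K_{\B}^{\kappa}(\cdot\mid s)\).

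By Equation~\eqref{eq:trusted-beneficial-joint}, \(U_{\B}(P_{\B}^{H,S,\kappa},g)=U_{\B}(P_{\B}^S,r_g)\). Hence \(r_g\) is feasible for \(\Gamma_S(q)\). The arbitrary off-support choice does not affect this, since those \(s\) carry zero \(P_{\B}^S\)-mass.

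By definition of \(Q_{\mathrm{copy}}^{\kappa}\), we also have \(U_{\M}(Q_{\mathrm{copy}}^{\kappa},g)=U_{\M}(P_{\M}^S,r_g)\geq\Gamma_S(q)\). The premise \(Q_{\mathrm{copy}}^{\kappa}\in\mathcal C_{\kappa}^{H,S}\) then gives \(\sup_{Q\in\mathcal C_\kappa^{H,S}}U_{\M}(Q,g)\geq\Gamma_S(q)\). Taking the infimum over feasible \(g\) yields the lower bound.

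\textbf{Upper bound \(R_{\kappa}^{H,S}(q)\leq\Gamma_S(q)\).} Take any \(r\) feasible for \(\Gamma_S(q)\) and define the transcript-independent rule \(g_r(a\mid s,h)=r(a\mid s)\).

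Under \(P_{\B}^{H,S,\kappa}\), summing out \(h\) gives \(U_{\B}=U_{\B}(P_{\B}^S,r)\geq q\), so \(g_r\) is feasible. For every \(Q\in\mathcal C_{\kappa}^{H,S}\), the \(S\)-marginal is fixed at \(P_{\M}^S\). This is the key structural fact: it holds for each generating law by the constraint in Equation~\eqref{eq:trusted-malicious-class}, and it is preserved under convex combinations and under closure, since the marginal map is linear and continuous on the finite simplex. Hence \(U_{\M}(Q,g_r)=U_{\M}(P_{\M}^S,r)\) for every such \(Q\).

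Taking the supremum over \(Q\) and then the infimum over \(r\) gives the upper bound. The extension to every \(\kappa\) is immediate, because \(\Gamma_S(q)\) does not depend on \(\kappa\).

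\textbf{Where care is needed.} The delicate points are bookkeeping rather than depth. First, the off-support version of \(K_{\B}^{\kappa}\) must be handled so that \(Q_{\mathrm{copy}}^{\kappa}\) is a well-defined law while the legitimate utility of \(r_g\) is unchanged. Second, the class \(\mathcal C_\kappa^{H,S}\) must genuinely retain the fixed \(S\)-marginal after taking closure. Third, if \(\mathcal C_{\kappa}^{H,S}\) were empty the supremum would be \(-\infty\); the premise excludes this.

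Attainment of the infimum in \(\Gamma_S\) is not needed, since both directions work with arbitrary feasible rules. If one wants attainment anyway, \(\Gamma_S\) is a linear program over a compact polytope of rules \(r\).
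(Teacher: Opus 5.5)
Your proposal is correct and follows essentially the same route as the paper. The lower bound averages a feasible joint rule over the reference kernel $K_{\B}^{\kappa}(h\mid s)$ and evaluates it on the admissible copied law. The upper bound uses an $H$-independent, $S$-conditioned rule together with the fixed malicious $S$-marginal. The only cosmetic difference is that you take an arbitrary feasible $r$ and pass to the infimum, rather than invoking an attaining optimizer $r^\star$ as the paper does; this sidesteps the question of whether $\Gamma_S(q)$ is attained.
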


\begin{proof}
For feasible \(g\), average over the reference kernel:
\begin{align*}
r_g(a\mid s)&=\sum_hK_{\B}^{\kappa}(h\mid s)g(a\mid s,h),\\
U_{\B}(P_{\B}^{H,S,\kappa},g)&=U_{\B}(P_{\B}^{S},r_g)\geq q,\\
U_{\M}(Q_{\mathrm{copy}}^{\kappa},g)&=U_{\M}(P_{\M}^{S},r_g)
\geq\Gamma_S(q).
\end{align*}
Admissibility of the copied law proves the lower bound. Conversely, let
\(r^\star\) attain \(\Gamma_S(q)\) and set
\(g^\star(a\mid s,h)=r^\star(a\mid s)\). Every admissible malicious law has
marginal \(P_{\M}^S\), so this feasible rule attains \(\Gamma_S(q)\).
\end{proof}

\begin{corollary}[When zero assistance is attainable]
\label{cor:trusted-zero-assistance}
Assume the dual-use condition in
Equation~\eqref{eq:dual-use-condition}, and define
\begin{equation}
\mathcal S_0=\{s\in\mathcal S:P_{\M}^S(s)=0\}.
\label{eq:legitimate-only-credential-support}
\end{equation}
For every \(q\in[0,q_{\max}]\),
\begin{equation}
\Gamma_S(q)=0
\quad\Longleftrightarrow\quad
q\leq q_{\max}P_{\B}^S(\mathcal S_0).
\label{eq:zero-assistance-credential-condition}
\end{equation}
Under the conditional-copying premise of
Theorem~\ref{thm:trusted-evidence-reduction}, the same condition is equivalent
to \(R_{\kappa}^{H,S}(q)=0\).
\end{corollary}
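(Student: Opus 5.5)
We prove the two directions of the equivalence separately. The zero-assistance statement for \(R_{\kappa}^{H,S}\) then follows from Theorem~\ref{thm:trusted-evidence-reduction}, which gives \(R_{\kappa}^{H,S}(q)=\Gamma_S(q)\) under conditional copying.

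\emph{Sufficiency.} Suppose \(q\leq q_{\max}P_{\B}^S(\mathcal S_0)\). Choose \(a^\star\) with \(u_{\B}(a^\star)=q_{\max}\), and define the \(S\)-conditioned rule \(r(\cdot\mid s)=\delta_{a^\star}\) for \(s\in\mathcal S_0\) and \(r(\cdot\mid s)=\delta_{a_0}\) otherwise.
\begin{itemize}
\item The legitimate utility is \(U_{\B}(P_{\B}^S,r)=q_{\max}P_{\B}^S(\mathcal S_0)\geq q\), so \(r\) is feasible.
\item The attacker utility is \(U_{\M}(P_{\M}^S,r)=\sum_{s\in\mathcal S_0}P_{\M}^S(s)u_{\M}(a^\star)=0\), because \(P_{\M}^S\) vanishes on \(\mathcal S_0\) and refusal has zero utility.
\end{itemize}
Since utilities are nonnegative, \(\Gamma_S(q)\geq 0\) always, and hence \(\Gamma_S(q)=0\). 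If \(q_{\max}=0\), both sides of the equivalence are trivially true, so we may assume \(q_{\max}>0\) in what follows, which makes \(\rho\) well defined.

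\emph{Necessity.} Suppose \(\Gamma_S(q)=0\).
\begin{itemize}
\item The infimum is attained. The feasible set of rules \(r\) is a compact polytope, being a finite product of simplices cut by one linear constraint, and the objective is linear.
\item Let \(r\) be a minimizer, so \(\sum_s P_{\M}^S(s)\sum_a r(a\mid s)u_{\M}(a)=0\). For every \(s\notin\mathcal S_0\) we have \(P_{\M}^S(s)>0\), so \(r(\cdot\mid s)\) must put all its mass on actions with \(u_{\M}(a)=0\).
\item By the dual-use condition~\eqref{eq:dual-use-condition}, every such action has \(u_{\B}(a)=0\). Equivalently, the bound \(u_{\M}\geq\rho u_{\B}\) forces \(\sum_a r(a\mid s)u_{\B}(a)=0\) for each such \(s\).
\item Therefore \(q\leq U_{\B}(P_{\B}^S,r)=\sum_{s\in\mathcal S_0}P_{\B}^S(s)\sum_a r(a\mid s)u_{\B}(a)\leq q_{\max}P_{\B}^S(\mathcal S_0)\).
\end{itemize}

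The main point needing care is attainment of the infimum in \(\Gamma_S\), which the direct argument uses. To avoid relying on a minimizer, one can argue quantitatively instead, for an arbitrary feasible \(r\):
\[
U_{\M}(P_{\M}^S,r)\geq \rho\,\min_{s\notin\mathcal S_0}\frac{P_{\M}^S(s)}{P_{\B}^S(s)}\,\bigl(q-q_{\max}P_{\B}^S(\mathcal S_0)\bigr)_+ .
\]
Here the minimum over \(s\notin\mathcal S_0\) is taken only over states with \(P_{\B}^S(s)>0\); states with \(P_{\B}^S(s)=0\) contribute nothing to legitimate utility. This bound gives a strictly positive lower bound whenever \(q>q_{\max}P_{\B}^S(\mathcal S_0)\), so it yields necessity without using attainment.
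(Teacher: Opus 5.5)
Your proof is correct and follows the paper's argument: sufficiency uses the same rule (a $q_{\max}$-attaining release on $\mathcal S_0$, refusal elsewhere), necessity uses the same support argument through the dual-use condition, and the $R_{\kappa}^{H,S}$ claim comes from Theorem~\ref{thm:trusted-evidence-reduction}. Your compactness argument for attainment, and the alternative quantitative bound (which checks out), make explicit a step the paper's proof takes for granted when it begins from ``a rule with zero attacker assistance.''
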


\begin{proof}
If a rule has zero attacker assistance, then for every
\(s\) with \(P_{\M}^S(s)>0\), it can assign positive probability only to
releases with \(u_{\M}(a)=0\). The dual-use condition gives
\(u_{\B}(a)=0\) for those releases. Legitimate utility can therefore arise only
on \(\mathcal S_0\), where it is at most
\(q_{\max}P_{\B}^S(\mathcal S_0)\). Conversely, choose a release attaining
\(q_{\max}\), use it on \(\mathcal S_0\), and refuse elsewhere. This rule has
zero attacker assistance and reaches the upper endpoint; randomizing with
refusal reaches every smaller \(q\).
\end{proof}

Under conditional copying, access history is a randomized post-processing of
\(S\) in the Blackwell sense \citep{blackwell1954theory}; the lower bound also
applies to restricted rules. Because \(P_{\M}^S\) changes across the stated
acquisition and misuse scenarios, this minimum is conditional on the malicious
signal distribution. A robust evaluation therefore holds one rule fixed
against their union.

For approximate copying, let
\(
\eta_{\kappa}=
\inf_{Q\in\mathcal C_{\kappa}^{H,S}}
\TV(Q,Q_{\mathrm{copy}}^{\kappa})
\).
Here \(S\) collects the noncopyable information used to predict downstream
use. Any systematic predictive component of \(H\) that the attacker cannot
copy belongs in \(S\); \(\eta_\kappa\) measures residual approximation error
for the remaining access history.

\begin{theorem}[Necessary condition with imperfect copying]
\label{thm:imperfect-credential}
For every feasible \(q\),
\begin{equation}
[\Gamma_S(q)-\eta_{\kappa}]_+
\leq R_{\kappa}^{H,S}(q)\leq\Gamma_S(q),
\label{eq:compact-approx-reduction}
\end{equation}
and
\(R_{\kappa}^{H,S}(q)\geq[\Gamma(q)-d-\eta_{\kappa}]_+\).
Therefore, if a \(q\)-feasible rule guarantees worst-case attacker assistance
at most \(\beta\), then
\begin{equation}
\boxed{\beta+d+\eta_{\kappa}\geq\Gamma(q)}.
\label{eq:necessary-safety-condition}
\end{equation}
\end{theorem}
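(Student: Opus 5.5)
The proof has three parts: the two-sided bound \eqref{eq:compact-approx-reduction}, a comparison of \(\Gamma_S\) with \(\Gamma\), and a final combination. The two-sided bound follows the imperfect-copying argument after Theorem~\ref{thm:compact-robust-value}, with the copied law \(Q_{\mathrm{copy}}^{\kappa}\) playing the role of \(P_{\B}^{\kappa}\).

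\textbf{Upper bound.} I reuse the constructive half of Theorem~\ref{thm:trusted-evidence-reduction}. Let \(r^\star\) attain \(\Gamma_S(q)\) and set \(g^\star(a\mid s,h)=r^\star(a\mid s)\). This rule ignores \(H\), so it is \(q\)-feasible. Every law in \(\mathcal C_{\kappa}^{H,S}\) has \(S\)-marginal \(P_{\M}^S\), so every such law yields assistance exactly \(\Gamma_S(q)\). This needs no copying premise, and it gives \(R_{\kappa}^{H,S}(q)\le\Gamma_S(q)\).

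\textbf{Lower bound.} Fix any \(q\)-feasible \(g\) and \(\epsilon>0\). Choose \(Q\in\mathcal C_{\kappa}^{H,S}\) with \(\TV(Q,Q_{\mathrm{copy}}^{\kappa})\le\eta_\kappa+\epsilon\). Apply \eqref{eq:tv-expectation} to the \([0,1]\)-valued function \((s,h)\mapsto\sum_a g(a\mid s,h)u_{\M}(a)\). This gives \(U_{\M}(Q,g)\ge U_{\M}(Q_{\mathrm{copy}}^{\kappa},g)-\eta_\kappa-\epsilon\). The averaging identity in the proof of Theorem~\ref{thm:trusted-evidence-reduction} gives \(U_{\M}(Q_{\mathrm{copy}}^{\kappa},g)=U_{\M}(P_{\M}^S,r_g)\ge\Gamma_S(q)\). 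The legitimate side is unaffected, because \(U_{\B}(P_{\B}^{H,S,\kappa},g)=U_{\B}(P_{\B}^S,r_g)\ge q\); this step does not use the copying premise. I take the supremum over \(Q\), then the infimum over \(g\), and let \(\epsilon\downarrow0\). Since assistance is nonnegative, the positive part follows.

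\textbf{Comparing \(\Gamma_S\) with \(\Gamma\).} I aim to show \(\Gamma_S(q)\ge\Gamma(q)-d\). Take any \(S\)-conditioned rule \(r\) with \(U_{\B}(P_{\B}^S,r)\ge q\). Apply \eqref{eq:tv-expectation} to \(s\mapsto\sum_a r(a\mid s)u_{\M}(a)\in[0,1]\), comparing \(P_{\M}^S\) with \(P_{\B}^S\). This gives \(U_{\M}(P_{\M}^S,r)\ge U_{\M}(P_{\B}^S,r)-d\). Now let \(\mu(a)=\sum_s P_{\B}^S(s)r(a\mid s)\). Then \(U_{\M}(P_{\B}^S,r)=\E_\mu[u_{\M}]\) and \(\E_\mu[u_{\B}]=U_{\B}(P_{\B}^S,r)\ge q\), so \(\mu\) is feasible in \eqref{eq:minimum-attacker-assistance} and \(\E_\mu[u_{\M}]\ge\Gamma(q)\). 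Taking the infimum over \(r\) gives the claim. Combining with the lower bound from the previous step, \(R_{\kappa}^{H,S}(q)\ge[\Gamma_S(q)-\eta_\kappa]_+\ge[\Gamma(q)-d-\eta_\kappa]_+\).

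\textbf{Boxed condition.} Suppose a \(q\)-feasible rule \(g\) guarantees worst-case assistance at most \(\beta\). Then \(\beta\ge\sup_Q U_{\M}(Q,g)\ge R_{\kappa}^{H,S}(q)\ge\Gamma(q)-d-\eta_\kappa\), which rearranges to \eqref{eq:necessary-safety-condition}.

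\textbf{Main obstacle.} The delicate step is the lower bound's use of a near-copy \(Q\) inside the closed convex hull. The bound holds for each fixed \(g\), but the supremum over \(Q\) must be taken after \(g\) is fixed. The infimum defining \(\eta_\kappa\) is uniform in \(g\), so this is fine. I also need that \(K_{\B}^{\kappa}\) is well defined off the support of \(P_{\B}^S\). The paper's fixed version handles this, and the averaging identity still holds there because \(r_g\) is then defined for every \(s\).
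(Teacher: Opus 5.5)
Your proof is correct and follows the paper's argument. You use the \(S\)-conditioned optimizer for the upper bound, and a near-copy of \(Q_{\mathrm{copy}}^{\kappa}\) with Equation~\eqref{eq:tv-expectation} and the averaged rule \(r_g\) for the lower bound. You then compare \(\Gamma_S(q)\geq\Gamma(q)-d\) through the legitimate release marginal. The one cosmetic difference is that you apply the total-variation bound directly on the signal space to \(s\mapsto\sum_a r(a\mid s)u_{\M}(a)\), while the paper first bounds \(\TV(\mu_{\B},\mu_{\M})\leq d\) by data processing; both give the same inequality.
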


\begin{proof}
For feasible \(g\), choose admissible \(Q_\epsilon\) within
\(\eta_\kappa+\epsilon\) of \(Q_{\mathrm{copy}}^\kappa\).
Equation~\eqref{eq:tv-expectation} and the averaged rule \(r_g\) from the
preceding proof give
\(U_{\M}(Q_\epsilon,g)\geq U_{\M}(P_{\M}^S,r_g)-\eta_\kappa-\epsilon
\geq\Gamma_S(q)-\eta_\kappa-\epsilon\). Taking the
supremum, infimum, and limit proves the first lower bound. The preceding
\(S\)-conditioned optimizer gives the upper bound. For any feasible \(r\), its
release marginals \(\mu_z(a)=\sum_sP_z^S(s)r(a\mid s)\) satisfy
\(\TV(\mu_{\B},\mu_{\M})\leq d\) by data processing. Since \(\mu_{\B}\) is
feasible for \(\Gamma(q)\), bounded utility gives
\(\Gamma_S(q)\geq[\Gamma(q)-d]_+\), and combining bounds proves the second.
\end{proof}
Equation~\eqref{eq:necessary-safety-condition} gives a necessary condition
whose three terms can be reported separately. By
Equation~\eqref{eq:evidence-prediction-accuracy}, a program aiming at
\(\beta\) needs a trusted signal whose optimal balanced accuracy for predicting
downstream use satisfies
\(p\geq(1+\Gamma(q)-\beta-\eta_\kappa)/2\). With no credential, a constant
\(S\) recovers
Equation~\eqref{eq:imperfect-copying} with
\(d=0\) and \(\eta_\kappa=\delta_\kappa\).
This accuracy threshold is necessary, not sufficient. For \(\beta=0\),
Corollary~\ref{cor:trusted-zero-assistance} gives the exact additional support
condition: enough legitimate utility must lie on signal values that the
malicious process cannot attain.

\subsection{Robustness and Scope Extensions}

The following results preserve the same capability and evidence
decomposition across task families, per-turn copying error, and repeated
access.

\paragraph{Task families.}
For a finite family \(\mathcal W_0\), let each \(w\) have its own malicious
application set, utility \(u_{\M,w}\), and minimum \(\Gamma_w\). Given fixed
weights \(\nu\) and local targets
\(\mathbf q=(q_w)_w\), copyable evidence within every context gives the
weighted lower bound
\(\overline\Gamma_{\nu}(\mathbf q)=\sum_{w\in\mathcal W_0}\nu(w)\Gamma_w(q_w)\)
by the same copied-law proof. Local constant optimizers attain this bound in the
unrestricted-kernel benchmark. Local targets preserve the meaning of capability
for each \(w\). Attacker-selected contexts enter the attacker optimization in
place of the fixed average.

\paragraph{Per-turn approximation.}
If the legitimate-reference and malicious next-message kernels differ by at
most \(\epsilon_t\) in total variation at turn \(t\), uniformly over coupled
reachable histories, maximal coupling gives
\(\TV(P_{\B}^{\kappa},P_{\M,\sigma}^{\kappa})
\leq \min\{1,\sum_{t}\epsilon_t\}\),
which upper-bounds \(\delta_\kappa\) in the copyable-evidence model.

\paragraph{Retries and task decomposition.}
Retries and task decomposition split one objective across many sessions, so
the complete access process is the relevant unit. Allowing
arbitrary causal state and correlated randomness, we study additive assistance,
for which the local minimum values yield an exact process-level value.

Across \(N\) sessions, let session \(i\) have observed public context \(w_i\),
local menu \(\mathcal A_{w_i}\), utilities
\(u_{\B,i}=u_{\B,w_i}\) and \(u_{\M,i}=u_{\M,w_i}\), target \(q_i\), and
minimum \(\Gamma_i\) formed from these two utilities on
\(\mathcal A_{w_i}\). A causal
policy may condition on all previous transcripts and releases and use
correlated randomness. For a committed policy \(\pi\), write
\(\mathbb P_{\B}^{\pi}\) for the legitimate-reference law of the complete process
\(\mathcal O_N=(H_1,A_1,\ldots,H_N,A_N)\), and let
\(\mathcal C_N^{\pi}\) be the closed convex set of complete-process laws
induced by admissible malicious strategies. Define
\begin{equation}
R_N^{\mathrm{sum}}(\mathbf q)
=\inf_{\substack{\pi:\ 
\E_{\mathbb P_{\B}^{\pi}}[u_{\B,i}(A_i)]\geq q_i\\
\text{for every }i}}
\sup_{Q\in\mathcal C_N^{\pi}}
\E_Q\!\left[\sum_{i=1}^N u_{\M,i}(A_i)\right],
\label{eq:adaptive-composition-value}
\end{equation}
using the same unrestricted-kernel benchmark convention as the main result.

\begin{proposition}[Exact additive composition under correlated access]
\label{thm:adaptive-exact-composition}
If \(\mathbb P_{\B}^{\pi}\in\mathcal C_N^{\pi}\) for every committed feasible
policy \(\pi\), including through arbitrary approximation in the closed class,
then \(R_N^{\mathrm{sum}}(\mathbf q)=\sum_{i=1}^N\Gamma_i(q_i)\).
No independence assumption is required.
\end{proposition}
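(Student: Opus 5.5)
The proof follows the same two-sided pattern as Theorem~\ref{thm:compact-robust-value}: a copied-law lower bound and a constant-rule upper bound, now applied session by session to the complete process.

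\textbf{Upper bound.} For each session \(i\), let \(\mu_i^\star\in\Delta(\mathcal A_{w_i})\) attain \(\Gamma_i(q_i)\) in Equation~\eqref{eq:minimum-attacker-assistance}; such a minimizer exists by finiteness. Take the policy \(\pi^\star\) that runs any access exchange in session \(i\) and then draws \(A_i\sim\mu_i^\star\) independently of all transcripts and prior releases. The marginal law of \(A_i\) is then \(\mu_i^\star\) under every complete-process law, legitimate or malicious. Hence \(\E_{\mathbb P_{\B}^{\pi^\star}}[u_{\B,i}(A_i)]\geq q_i\) for each \(i\), and by linearity of expectation every \(Q\in\mathcal C_N^{\pi^\star}\) gives \(\E_Q[\sum_iu_{\M,i}(A_i)]=\sum_i\Gamma_i(q_i)\). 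So \(R_N^{\mathrm{sum}}(\mathbf q)\leq\sum_i\Gamma_i(q_i)\).

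\textbf{Lower bound.} Fix any feasible \(\pi\). By the premise, \(\mathbb P_{\B}^{\pi}\in\mathcal C_N^{\pi}\). Therefore
\[\sup_{Q\in\mathcal C_N^\pi}\E_Q\Big[\sum_iu_{\M,i}(A_i)\Big]\geq\E_{\mathbb P_{\B}^\pi}\Big[\sum_iu_{\M,i}(A_i)\Big]=\sum_i\E_{\mu_i}[u_{\M,i}],\]
where \(\mu_i\) is the marginal law of \(A_i\) under \(\mathbb P_{\B}^\pi\).

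Although \(\pi\) may correlate sessions arbitrarily, \(\mu_i\) is still a single distribution on \(\mathcal A_{w_i}\). Feasibility of \(\pi\) gives \(\E_{\mu_i}[u_{\B,i}]\geq q_i\), so \(\mu_i\) is feasible for the program defining \(\Gamma_i(q_i)\). This yields \(\E_{\mu_i}[u_{\M,i}]\geq\Gamma_i(q_i)\). Summing over \(i\) and taking the infimum over \(\pi\) closes the bound.

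\textbf{Main obstacle.} The delicate point is showing that correlation and adaptivity cannot help the defender. This is resolved entirely by two facts. First, additivity of the objective makes only the per-session marginals matter. Second, each marginal is evaluated against the fixed menu \(\mathcal A_{w_i}\) and utilities of session \(i\).

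The contexts \(w_i\) must be treated as fixed and observed in advance, as the definition of \(R_N^{\mathrm{sum}}\) assumes. If they were random and history-dependent, the marginal would be a mixture across contexts and would require the task-family weighting instead.

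A closedness check is also needed. The premise allows \(\mathbb P_{\B}^\pi\) to be reached only as a limit of strategy-induced laws. The objective is a bounded linear functional on the finite outcome space, so it is continuous, and the supremum over the closed convex hull equals the supremum over the generating laws. This is why no independence assumption between sessions enters.
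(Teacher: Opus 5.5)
Your proposal is correct and follows the paper's proof essentially step for step. The copied complete-process law makes each release marginal \(\mu_i\) feasible for \(\Gamma_i(q_i)\), linearity of expectation sums these per-session floors, and transcript-independent draws \(A_i\sim\mu_i^\star\) attain the sum. One small correction: the absence of an independence assumption comes from linearity acting on per-session marginals, not from your closedness check, which only reconciles the supremum over \(\mathcal C_N^{\pi}\) with the supremum over strategy-induced laws.
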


\begin{proof}
Fix any feasible causal policy \(\pi\). Under the copied complete-process law
\(\mathbb P_{\B}^{\pi}\), the marginal \(\mu_i^{\pi}\) of \(A_i\) satisfies
\(\sum_a\mu_i^{\pi}(a)u_{\B,i}(a)\geq q_i\).
It is therefore feasible for \(\Gamma_i(q_i)\), so
\(\sum_a\mu_i^{\pi}(a)u_{\M,i}(a)\geq\Gamma_i(q_i)\).
Linearity of expectation gives assistance at least \(\sum_i\Gamma_i(q_i)\)
under the copied law, which lies in \(\mathcal C_N^{\pi}\). Conversely, drawing
\(A_i\sim\mu_i^\star\) from an optimizer of each \(\Gamma_i(q_i)\) while
ignoring all transcripts meets every target and has \(i\)th release marginal
\(\mu_i^\star\) under every complete-process law, giving exactly that sum.
\end{proof}
Thus the exact floor adds across correlated sessions, and task decomposition
cannot reduce it. Only noncopyable history predictive of downstream use can
move the floor.

\paragraph{Binary objectives.}
Additive assistance is one aggregation rule; an attacker who needs a single
success faces a different aggregation objective. If every fresh attempt
succeeds with conditional probability at least \(r\) given no earlier success,
cumulative success probability is at least \(1-(1-r)^N\). Each per-response
quantity therefore pairs with an access budget and a composition rule.
Predictive cumulative history can enter \(S\) as part of the trusted signal.

\subsection{Design Implications}

Equation~\eqref{eq:necessary-safety-condition} shows what an intervention must
change.
Changing the available outputs can lower \(\Gamma(q)\). As long as the changed
menu still satisfies the dual-use condition, however, it cannot make
\(\Gamma(q)\) zero while preserving a feasible \(q>0\). With copyable
evidence, refusal can reach zero assistance only by reducing legitimate-use
utility to zero.

A trusted credential adds predictive information unavailable in copyable
evidence while \(W\) and \(H\) continue to support task and capability
decisions. Its value depends on the induced malicious signal distribution
across the stated acquisition and misuse scenarios. At a fixed released
capability, reducing assistance below
\(\Gamma(q)\) requires noncopyable evidence that predicts downstream use.
Prediction advantage alone is insufficient for zero assistance:
Corollary~\ref{cor:trusted-zero-assistance} requires enough legitimate utility
on signal values the malicious process cannot attain. In the language of
the trilemma, preserving useful released capability with zero-assistance safety
requires adding a trusted credential to the existing safeguard. It therefore
gives up access based only on copyable evidence.

\section{Empirical Evidence}
\label{sec:empirical-evidence}

We assess how the conditions of the theory arise in LLM practice: dual-use
outputs, copyable evidence, capability loss under blocking, and the use of
trusted credentials in deployed access programs.

\subsection{Dual Use in LLM Outputs}
\label{sec:empirical-results}

Equation~\eqref{eq:dual-use-condition} motivates the empirical question:
whether useful LLM releases also have malicious downstream uses.

Existing LLM evaluations provide evidence that such tasks exist. Internal
Safety Collapse constructs legitimate professional tasks whose correct completion
requires a reusable harmful artifact and observes the corresponding failure
mode across frontier models
\citep{wu2026internalsafety}. OpenSafeIntent holds the underlying task fixed
while varying benign, dual-use, and malicious intent, and finds that model
assistance varies across matched and paraphrased variants
\citep{uppaal2026opensafeintent}. Together, they identify candidate dual-use
task families even without overtly malicious requests.
Establishing the action-level dual-use condition and measuring \(\rho\) require
evaluating each family's attainable release menu.

\subsection{Attackers Can Reproduce Legitimate Evidence}

Evidence is copyable when the specified attacker class can reproduce the
legitimate reference law. Concealment of Intent hides
malicious objectives through skill composition and bypasses prompt and response
filters \citep{wu2025concealment}; multi-turn attacks construct
innocuous-looking trajectories through escalation, decomposition, or
complementary requests
\citep{russinovich2025crescendo,jiang2025redqueen,weng2025footinthedoor}.

The Attacker Moves Second optimizes only after observing the defense and reports
success above 90\% against most of twelve recent defenses
\citep{nasr2025attacker}. Documented cyber misuse combined task decomposition
with a false claim of acting for a legitimate security organization
\citep{anthropic2025espionage}, making a purpose declaration copyable evidence
rather than evidence about downstream use.

Evidence from defense evaluations is consistent with the same weakness. Malicious-input
detectors can rely on instructional patterns and trigger words
\citep{wang2025falseconfidence}, while outcome-aware evaluations find reliance
on surface semantic and stylistic cues \citep{wu2025outcomeaware}. Some
interactive checks are reproducible: automated solvers already pass deployed
CAPTCHAs at human-comparable rates
\citep{searles2023captchas,plesner2024recaptcha}.

Together these observations make evidence copying a realistic threat for
software-capable attackers facing text and interaction evidence. Deployment
claims nevertheless require policy-specific estimates of \(\delta_\kappa\)
against the stated attacker class: Equation~\eqref{eq:imperfect-copying} keeps
the worst-case value near \(\Gamma(q)\) only when this distance is small.

\subsection{Capability Reduction and Utility Loss}

When the evidence is copyable, Theorem~\ref{thm:compact-robust-value} shows
that blocking cannot guarantee assistance below \(\Gamma(q)\) at a fixed
legitimate-use target. A particular implementation may still reduce excess
assistance above this minimum. Consistent with the underlying safety--utility
conflict, guardrail evaluations find that security cannot be assessed
independently of benign utility
\citep{kumar2025freelunchguardrails,wang2025guardsok}. XSTest and OR-Bench
document broad over-refusal on safe prompts sharing surface features with unsafe
requests \citep{rottger2024xstest,cui2025orbench}.

In cybersecurity, Defensive Refusal Bias finds elevated refusal on authorized
defensive tasks \citep{campbell2026defensive}, and same-lineage comparisons
report lost vulnerability-analysis utility in aligned models relative to
refusal-ablated counterparts \citep{li2026beyondrefusal}. CarryOnBench finds
that benign users recover withheld utility only through clarification, with
utility lock-in and unsafe recovery \citep{zheng2026carryon}. For tasks
exhibiting Internal Safety Collapse, SafeRedirect obtains much of its mitigation
by permitting task failure and unresolved placeholders
\citep{pan2026saferedirect}.

Across distinct mechanisms, these studies exhibit the predicted empirical
signature: capability reduction changes what the model releases without
predicting downstream use. Under the copyability premise, this pattern is
consistent with lowering excess assistance while leaving the theoretical floor
in place.

\subsection{Trusted Credentials in Deployed Access Programs}

Trusted credentials must encode information that predicts downstream use and
that a software-only attacker cannot freely reproduce. Hardware-rooted
attestation and unforgeable tokens can supply verified platform state and prior
authorization events
\citep{parno2010bootstrapping,coker2011principles,davidson2018privacypass}.
Documented cyber programs condition access on identity and trust verification,
account security, and verified roles
\citep{openai2026trustedcyber,openai2026gpt55cyber}; a verification program for
security researchers has also been described \citep{anthropic2026opus47}.
These programs add verification while retaining content safeguards and misuse
controls, matching the joint role of \(W,H,S\). They instantiate noncopyable
signals such as verified roles, platform state, and persistent history. A
deployment-specific worst-case claim must estimate \(d\) and \(\eta_\kappa\) under
credential transfer, compromise, account creation, and misuse by authorized
holders; zero assistance additionally requires the support condition in
Corollary~\ref{cor:trusted-zero-assistance}.
Theorem~\ref{thm:imperfect-credential} governs how these terms move the
worst-case floor, while Equation~\eqref{eq:necessary-safety-condition} states
the necessary condition with terms that can be reported separately.

\section{Limitations}

The characterization assumes a fixed utility calibration, finite operational resolution, and a specified attacker class. Deployment-specific claims additionally require a policy-specific copying-error estimate. These choices determine the bound and whether its premise applies; when the copied-law premise holds, the reduction is exact.

Throughout this paper, open access means credential-free access to a committed inference-time mechanism, rather than access to released model weights. Settings in which users obtain the weights and deploy or modify the model under their own control change the mechanism or release menu and fall outside the access-evidence model.

\section{Conclusion}

An LLM safeguard decides before observing actual downstream use. For dual-use
tasks with copyable request and interaction evidence, any release rule
preserving legitimate utility leaves worst-case attacker assistance at
least \(\Gamma(q)>0\). Changing the
output menu can lower this capability floor; moving below it requires adding a
trusted credential whose noncopyable evidence predicts downstream use.
\bibliography{references}
\end{document}